\newtheorem{theorem}{Theorem}[section]
\newtheorem{prop}[theorem]{Proposition}
\newtheorem{definition}[theorem]{Definition}
\newtheorem{remark}[theorem]{Remark}
\newcommand{\w}{\mathbf{w}}
\newcommand{\D}{\mathcal{D}}
\newcommand{\CH}{\mathcal{C}(H)}
\newcommandx{\mm}[2][1=]{\todo[linecolor=red,backgroundcolor=yellow!20,#1]{#2 \\ \hfill --- Martín}}
\newcommandx{\lmcom}[2][1=]{\todo[linecolor=red,backgroundcolor=green!20,#1]{#2 \\ \hfill --- Luis}}
\title{The Minimum Clique Routing Problem on Cycles\footnote{Corresponding author: ptolomei@fceia.unr.edu.ar}}
\author[1]{M. Escalante and P. Tolomei}
\author[2]{M. Matamala and I. Rapaport}
\author[3]{L.M. Torres}
\affil[1]{\small FCEIA, Universidad Nacional de Rosario and CONICET, Argentina}
\affil[2]{\small DIM-CMM (ILR-CNRS-2807), Universidad de Chile, Chile}
\affil[3]{\small MODEMAT, 
Escuela Polit{\'e}cnica Nacional, Ecuador}
\date{}
\begin{document}

\maketitle

\normalem

\begin{abstract}
\noindent In the Minimum Clique Routing Problem on Cycles \textsc{MCRPC} we are given a cycle together with a set of demands (weighted origin-destination pairs) and the goal is to route all the pairs minimizing the maximum weighted clique of the intersection graph induced by the routing.
The vertices of this graph are the demands with their corresponding weights and two demands are adjacent when their routes share at least one arc.
In this work we are not only interested in the \textsc{MCRPC} but also in two natural subproblems. First, we consider the situation where the demands are disjoint, in the sense that every two demands do not share any of their corresponding ends. Second, we analyze the subproblem where the weights of the routes are all equal.   
We first show that the  problem is NP-complete even in the subproblem  of disjoint demands. 
For the case of arbitrary weights, we exhibit a simple combinatorial 2-approximation algorithm and a $\frac{3}{2}$-approximation algorithm based on rounding a solution of a relaxation of an integer linear programming formulation of our problem. 
Finally, we give a Fixed Parameter Tractable algorithm for the case  of uniform weights, whose parameter is related to the maximum degree of the intersection graph induced by any routing.
   
   \noindent\textit{Keywords:} combinatorial optimization, routing problem, ring networks, algorithms
\end{abstract}

\section{Introduction}

In the Minimum Clique Routing Problem \textsc{MCRP} \cite{stefanakos2004routing}, the input is an undirected graph and a set of weighted origin-destination pairs of nodes $\D$.
The goal is to 
specify undirected paths in this graph (routes) connecting each origin node with the corresponding 
destination node in such a way that the maximum weighted clique in the intersection graph induced by the routing is minimized, where two routes intersect if they have at least one common arc and the weight of a demand is associated to the corresponding route.

The \textsc{MCRP} is NP-hard in general, as deciding whether there is a disjoint set of routes connecting all pairs in $\D$ corresponds to the Edge Disjoint Path Problem \textsc{EDPP}, a classical NP-complete problem \cite{Karp}, which remains NP-complete even when restricted to planar graphs \cite{Lynch} and to series-parallel graphs \cite{NishizekiEtAl}. On the other hand,  the \textsc{EDPP} admits a Fixed Parameter Tractable (FPT) algorithm for parameter $|\D|$, the number of demands \cite{KKR,RS}. The optimization version of the \textsc{EDPP}, where the number of demands whose routes form a pairwise edge disjoint set is maximized, has an $O(\sqrt{n})$-approximation algorithm \cite{CK}.

 The \textsc{MCRP} is closely related to the Minimum Load Routing Problem \textsc{MLRP}, where the goal is to find a routing which minimizes the load of the network, i.e., the maximum weight of routes sharing an edge. This problem has an approximation factor of $O(\log n/ \log \log n)$ which is matched by an $\Omega(\log n/ \log \log n)$-hardness result \cite{CN}. 

The \textsc{MLRP} restricted to cycles, \textsc{MLRPC}, was first discussed in \cite{CS}, where it was proved to be NP-hard and 2-approximable. This result was improved later in \cite{Khanna}, where the existence of a Polynomial Time Approximation Scheme was proved. It is also known that the problem can be solved in polynomial time when all the weights are the same \cite{Frank}.

In this paper we focus on the MCRP where the input graph is a cycle. This problem, the Minimum Clique Routing Problem on Cycles \textsc{MCRPC}, was first studied in \cite{stefanakos2004routing}. Note that, since any set of routes containing a common edge is a set of pairwise intersecting routes, the optimal value of the \textsc{MCRP} is an upper bound for the optimal value of the \textsc{MLRP}.

In \cite{stefanakos2004routing} the authors construct families of instances with uniform demand weights for which a minimum load routing can result in a clique of size almost twice the optimum.

In the context of the routing in optical networks, instead of minimizing the maximum clique, one is often interested in minimizing the resulting chromatic number of the intersection graph induced by the routing. For instance, in the Routing and Wavelength Assignment Problem  \textsc{RWAP} the task is to find, for each origin-destination pair, a route through the network and a wavelength (or color) so that routes with the same wavelength share no common edges, and the number of required wavelengths is minimized \cite{OB2003}.

\subsection{Our results}
In this work we are interested in the \textsc{MCRPC}, but also in two natural subproblems of it. First, we address the situation where the demands are {disjoint}, in the sense that each node of the cycle may belong to at most one origin-destination pair. Second, we consider the subproblem where the demand weights are {uniform}, i.e., the weights of the origin-destination pairs are all equal.

In Section \ref{sec:np} we show that the \textsc{MCRPC} is NP-complete even if the demands are disjoint. 
In Section \ref{sec:approx} we work with arbitrary weights. We first propose a simple combinatorial 2-approximation algorithm and then we exhibit an LP-based $\frac{3}{2}$-approximation algorithm for the \textsc{MCRPC}, extending to arbitrary weights a similar algorithm proposed in \cite{stefanakos2004routing} for the case of uniform demand weights.

In Section \ref{sec:fpt} we present an FPT algorithm for the particular case of uniform demand weights.

We summarize our results in the following tables.

\bigskip
\begin{center}
\begin{tabular}{|c|c|c|}
\hline
\multirow{2}{*}{} &\multicolumn{2}{|c|}{\large\bf{Arbitrary weights}} \\
\cline{2-3}
& \bf{Disjoint demands} & \bf{General demands} \\
\hline
\multirow{2}{*}{
Complexity} 
& NP-Complete & NP-Complete \\
& (Theorem. \ref{t:nphard}) & (from Theorem \ref{t:nphard}) \\
\hline
\multirow{2}{*}{
Combinatorial approx.} & ${2}$-approx & ${2}$-approx \\
 & (from Theorem. \ref{t:2-approx})& (Theorem. \ref{t:2-approx})\\
\hline
\multirow{2}{*}{
LP-based approx.} & $\frac{3}{2}$-approx & $\frac{3}{2}$-approx \\
 & (from Theorem. \ref{th:lp})& (Theorem. \ref{th:lp})\\
\hline
\end{tabular}

\bigskip

\begin{tabular}{|c|c|c|}
\hline
\multirow{2}{0.75cm}{} &\multicolumn{2}{|c|} {\large\bf{Uniform weights}} \\
\cline{2-3}
& \bf{Disjoint demands} & \bf{General demands} \\
\hline
Complexity & Unknown & Unknown \\
\hline
 \multirow{2}{*}{
Parameterized Complexity} & FPT & FPT \\ 
&(from Theorem \ref{thm:fpt_general}) & (Theorem \ref{thm:fpt_general})\\
\hline
\multirow{2}{*}{
Combinatorial approx.} & ${2}$-approx & ${2}$-approx \\
 & (from Theorem. \ref{t:2-approx})& (Theorem. \ref{t:2-approx})\\
\hline
\multirow{2}{*}{
LP-based approx.} & $\frac{3}{2}$-approx & $\frac{3}{2}$-approx \\
& (from \cite{stefanakos2004routing}) &  (from \cite{stefanakos2004routing})\\
\hline
\end{tabular}
\end{center}

In \cite{alioeuro22} we have presented a preliminary version of some of the results included in this contribution.

\section{Preliminaries}
\label{sec:preliminaries}

In order to state and prove our results, we introduce some definitions and notation.  The graph over which the demands are routed is a cycle on $n$ nodes. We assume that the edges of this cycle have been oriented to obtain the (directed) circuit $(1,2, \ldots,n)$.

For any two distinct nodes $i$ and $j$ in $\{1,2,\ldots,n\}$ we denote by $[i,j]$ the unique directed path from $i$ to $j$ in this circuit and let $\overline{[i,j]} = [j,i]$. Observe that each route in the cycle is given by $[i,j]$, for some $i,j\in \{1,\ldots,n\}$. Moreover, the route $[i,j]$ does not contain the arc $(n,1)$, when $i<j$, and it does, when $j<i$.

A demand is an origin-destination pair  $p=(i,j)$, with $1\leq i<j\leq n$. Recall that we are working in an undirected setting, and therefore $p$ can be routed in two ways: either through the path $p^+=[i,j]$ or through $p^-=[j,i]$. Note that, since $i < j$, $p^-$ contains the arc $(n,1)$, while  $p^+$ does not.  The nodes $i$ and $j$ are called the ends of $p$, and we denote by $e(p) = \{i,j\}$ the set of ends of $p$. 
If two demands have the same ends, we say that these demands are multiples.

The input of \textsc{MCRPC} consists of a set $\D$ of demands and a vector $\w=(\w_p)_{p \in \D}$ of corresponding weights.
A routing $R$ for $\D$ consists in the assignment of one route $R_p \in \{p^+,p^-\}$ to each demand $p \in \D$. Figure~\ref{f:secondexample} depicts an example of an instance with six demands on a cycle with eight nodes.
For a subset of demands $\D'\subseteq \D$ 
we set $\w(\D')=\sum_{p\in \D'}\w_p$. For a subset of routes 
$S$ of a routing $R$ of $\D$, we set $\w(S)=\w(\{p\mid R_p\in S\})$.

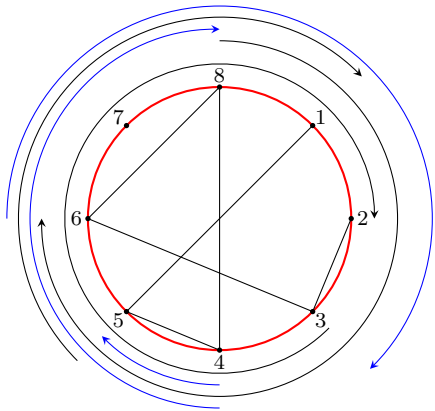
\begin{figure}[H]

\begin{center}
\begin{tikzpicture}[>=stealth,scale=0.7]
\pgfmathsetmacro{\mmt}{0.22}
\pgfmathsetmacro{\mmr}{2.5}

\path[draw,red,thick,
%postaction={decorate,
         %decoration={markings,
         %mark=between positions 0.01 and 1.01 step 1/8 with {\arrow[blue,scale=1.2]{<}; }}}
         ]
         (0:\mmr) arc (0:360:\mmr) -- cycle;
     \foreach \i in {8,...,1} {
    \coordinate (N\i) at (-\i*360/8+90:\mmr cm);
    \fill[black] (N\i) circle (0.05 cm);
    \draw (-\i*360/8+90:\mmr+\mmt) node{{\scriptsize $\i$}};
  }
\draw (N2) -- (N3);
\draw (N4) -- (N5);
\draw (N8) -- (N6);
\draw (N8) -- (N4);
\draw (N1) -- (N5);
\draw (N6) -- (N3);
\draw[->] (0,0) ++ (-45:\mmr+2*\mmt) arc (-45:-360:\mmr+2*\mmt);
\draw[blue,->] (0,0) ++ (-90:\mmr+3*\mmt) arc (-90:-135:\mmr+3*\mmt);
\draw[->] (0,0) ++ (90:\mmr+4*\mmt) arc (90:-180:\mmr+4*\mmt);
\draw[blue,->] (0,0) ++ (-90:\mmr+5*\mmt) arc (-90:-270:\mmr+5*\mmt);
\draw[->] (0,0) ++ (-135:\mmr+6*\mmt) arc (-135:-315:\mmr+6*\mmt);
\draw[blue,->] (0,0) ++ (-180:\mmr+7*\mmt) arc (-180:-405:\mmr+7*\mmt);
\end{tikzpicture}
\end{center}
\caption{An instance of the MCRPC for the cycle of eight nodes, set of demands $\D=\{(1,5),(2,3),(3,6),(4,5),(4,8),(6,8)\}$ and $\w=\mathbf{1}$. Routing $R=\{[3,2],[4,5],[8,6],[4,8],[5,1],[6,3]\}$ of $\D$ has weight 4.}
\label{f:secondexample}
\end{figure}

Let $H$ be the undirected graph with vertex set given by $\{ \hat{p}^+, \hat{p}^- \, : \, p \in \D \}$, where $\hat{p}^+$ (resp. $\hat{p}^-$) represents the route $p^+$ (resp. $p^-$) with associated weights $\w(\hat{p}^+)=\w(\hat{p}^-)=\w(p)$, and two vertices are adjacent if and only if their corresponding routes share at least one common arc. Loosely speaking we call $H$ the intersection graph of the routes.

Since each route is a path of the cycle $G$, the graph $H$ is a circular-arc graph. Moreover, to each routing $R$ of $\D$ we can associated the subgraph of $H$ induced by the routes of $R$, which we denote by $H_R$.

The graph $H$ has interesting structural properties. 
Since any route $p^-$, with $p\in \D$, contains the arc $(n,1)$, the subgraph induced by the set $\{\hat{p}^-\mid p\in \D\}$ is complete. Similarly, as each route $p^+$ is contained in the path from $1$ to $n$, the subgraph induced by the set $\{\hat{p}^+\mid p\in \D\}$ is an interval graph. Figure \ref{f:firstexampleH} follow from the example given in Figure \ref{f:secondexample}. In order to get a cleaner picture we have drawn the non-edges of $H$ and of $H_R$.

\begin{figure}[H]

\begin{tikzpicture}
{\scriptsize
\node (D) at (1,3) {$\D=\{(1,5),(2,3),(3,6),(4,5),(4,8),(6,8)\}$};

\node (R) at (8.25,3) {$R=\{[3,2],[4,5],[8,6],[4,8],[5,1],[6,3]\}$};

\node (p15) at (-1.5,0) {$[1,5]$};
\node (p23) at (-0.5,0) {$[2,3]$};
\node (p36) at (0.5,0) {$[3,6]$};
\node (p45) at (1.5,0) {$[4,5]$};
\node (p48) at (2.5,0) {$[4,8]$};
\node (p68) at (3.5,0) {$[6,8]$};

\node (p51) at (-1.5,2) {$[5,1]$};
\node (p32) at (-0.5,2) {$[3,2]$};
\node (p63) at (0.5,2) {$[6,3]$};
\node (p54) at (1.5,2) {$[5,4]$};
\node (p84) at (2.5,2) {$[8,4]$};
\node (p86) at (3.5,2) {$[8,6]$};

\draw (p15) -- (p51);
\draw (p23) -- (p51);
\draw (p23) -- (p32);
\draw (p36) -- (p63);
\draw (p45) -- (p51);
\draw (p45) -- (p54);
\draw (p48) -- (p84);
\draw (p68) -- (p86);

\draw (p63) -- (p45);
\draw (p84) -- (p45);
\draw (p84) -- (p68);

\draw (p15) to[out=-70,in=-110] (p68);

\draw (p23) to[out=-60,in=-120] (p68);
\draw (p23) -- (p36);
\draw (p23) to[out=-20,in=-120] (p45);

\draw (p23) to[out=-30,in=-110] (p48);

\draw (p45) to[out=-20,in=-140] (p68);

\draw (p36) to[out=-40,in=-130] (p68);

\node (p45) at (9,0) {$[4,5]$};
\node (p48) at (10,0) {$[4,8]$};

\node (p51) at (6,2) {$[5,1]$};
\node (p32) at (7,2) {$[3,2]$};
\node (p63) at (8,2) {$[6,3]$};
\node (p86) at (11,2) {$[8,6]$};

\draw (p45) -- (p51);
\draw (p63) -- (p45);

}
\end{tikzpicture}
\caption{A set of demands, the complement of its associated graph $H$, a routing of the demands and the complement of its associated subgraph $H_R$. }\label{f:firstexampleH}
\end{figure}
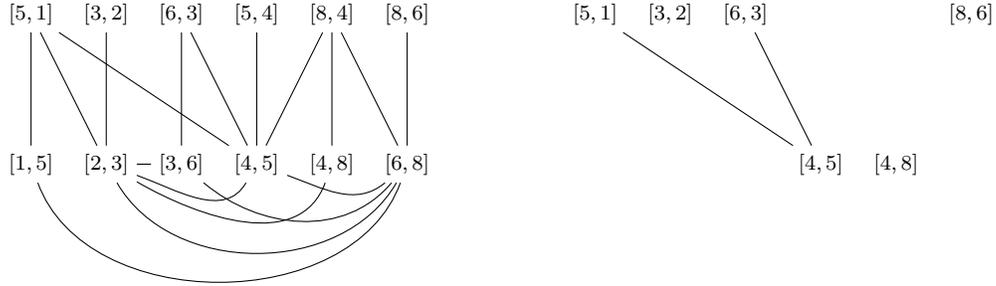

For a given routing $R$, $\omega(H_R,\w)$ denotes the maximum weight of a clique in $H_R$. Since $H_R$ is arc-circular, this value can be computed in time $O(|\D|^3)$, by Hsu's result \cite{Hsu85}.

The  MCRPC consists in determining 
\[c\ell(\D,\w)=\min\{\omega(H_R,\w)\mid R \text{ routing of } \D\}.\] 

For a given demand $p=(i,j) \in \D$, we define 
$V(p^+)$ as the vertices of the directed path $p^+$. $V(p^-)$ is defined in the same way.

\begin{definition}
For given demands $p, q \in \D$ we say that $p$ and $q$ \emph{cross} if $|e(p)\cap V(q^+)|=|e(p)\cap V(q^-)|=1$.
When $p$ and $q$ do not cross we say that they are \emph{parallel}.
\end{definition}

Then, two demands $p$ and $q$ cross if $p$ has exactly one end in each of the possible routes for $q$, i.e., in $q^+$ and $q^-$. Notice that this implies that $q$ also has exactly one end in each of the routes $p^+$ and $p^-$. Moreover, any route of $p$ and any route of $q$ have at least one arc in common. Then, according to this definition, multiple demands are parallel.

When drawing the directed cycle like a circle in the plane, a demand can be represented by a chord between its origin and its destination. In this representation, chords associated with two demands that cross, intersect in the interior of the circle.  In Figure~\ref{f:secondexample}, $(1,5)$, $(4,8)$ and $(3,6)$ are pairwise crossing, while $(1,5)$, $(2,3)$ and $(6,8)$ are pairwise parallel.

\begin{definition}
We say that two demands $p=(i,j)$ and $p'=(i',j')$ are \emph{disjoint} if the nodes $i,j,i',j'$ are all different. 
The \textsc{MCRPC} with disjoint demands is the subproblem where every two different demands of $\D$  are disjoint. 
\end{definition}

\section{NP-completeness with disjoint demands}\label{sec:np}

\begin{theorem}\label{t:nphard} 
The \textsc{MCRPC} with disjoint demands is NP-complete.
\end{theorem}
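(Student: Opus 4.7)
The first step is to establish membership in NP. Given a routing $R$ of $\D$, the graph $H_R$ is a circular-arc graph by construction, so $\omega(H_R,\w)$ can be computed in polynomial time using Hsu's algorithm and compared against a target value $k$. Thus the decision version of \textsc{MCRPC} with disjoint demands lies in NP.

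For NP-hardness, my plan is to reduce from PARTITION: given positive integers $a_1,\ldots,a_m$ summing to $2B$, decide whether there exists $S\subseteq\{1,\ldots,m\}$ with $\sum_{i\in S}a_i=B$. I would build a cycle on $n = 2m + c$ nodes (where $c$ is a small constant accommodating a few padding nodes) and pairwise disjoint demands $p_1,\ldots,p_m$ with $\w(p_i) = a_i$, placing their endpoints so that two designated arcs $e_1, e_2$ of the cycle satisfy the following: for every $i$, the route $p_i^+$ traverses $e_1$ and avoids $e_2$, while $p_i^-$ traverses $e_2$ and avoids $e_1$. Consequently, for any routing $R$, setting $S := \{i : R_{p_i} = p_i^+\}$, the routes through $e_1$ form a clique of weight $\w(S)$ in $H_R$ and those through $e_2$ form a clique of weight $\w(\bar S)=2B-\w(S)$. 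In particular, $\omega(H_R,\w)\ge\max(\w(S),\w(\bar S))\ge B$, with equality forcing $\w(S)=\w(\bar S)=B$ and yielding a PARTITION solution.

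The main obstacle is to guarantee that no other arc and no ``spurious'' clique of $H_R$ exceeds the bound $\max(\w(S),\w(\bar S))$. Two issues arise: (i) arcs of the cycle distinct from $e_1,e_2$, whose load I would control by placing endpoints so that any such arc carries only a restricted subfamily of routes with total weight bounded by $\max(\w(S),\w(\bar S))$; and (ii) cliques of pairwise intersecting routes in the circular-arc graph $H_R$ that share no common arc, which I expect to rule out by a case analysis based on how many ``$+$'' and how many ``$-$'' routes appear in the clique, using that disjoint demands have few crossings. A natural refinement, if the case analysis becomes delicate, is to augment the instance with a constant number of heavy ``obstruction'' demands that pin the routing of the $p_i$'s into the canonical $+/-$ dichotomy and make the structural argument routine.

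The converse direction of the equivalence is immediate: given $S^*$ with $\sum_{i\in S^*}a_i=B$, routing $p_i$ as $p_i^+$ for $i\in S^*$ and as $p_i^-$ otherwise yields $\omega(H_R,\w)=B$ by the structural property above. As the construction is polynomial in $m$ and in the bit-size of the weights, this establishes NP-completeness of \textsc{MCRPC} with disjoint demands.
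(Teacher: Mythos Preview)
Your NP-membership argument is fine and matches the paper. The reduction, however, has a genuine gap in the forward direction (PARTITION yes $\Rightarrow$ MCRPC value $\le B$): you assert that routing according to a balanced partition gives $\omega(H_R,\w)=B$ ``by the structural property above'', but you never establish that $\max(\w(S),\w(\bar S))$ is the \emph{maximum} clique weight, only that it is a lower bound. In fact, for the natural placements satisfying your $e_1/e_2$ condition the claim fails. If the $p_i$ are pairwise \emph{crossing} (the simplest way to make every $p_i^+$ contain $e_1$ and every $p_i^-$ contain $e_2$), then any two routes intersect regardless of orientation, so $H_R$ is complete and $\omega(H_R,\w)=2B$ for \emph{every} routing; the target $B$ is never reachable. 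If instead the $p_i$ are pairwise \emph{nested}, one checks that for $p_i^+\subset p_j^+$ the routes $p_i^-$ and $p_j^+$ still intersect, and cliques of the form $(\bar S\cap[1,t])\cup(S\cap[t+1,m])$ (ordering by nesting) can have weight strictly larger than $\max(\w(S),\w(\bar S))$ even when the partition is balanced. So your ``case analysis'' cannot succeed: the spurious cliques are not an artifact to be ruled out but a structural obstruction to the target $B$.

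The fix is precisely what you mention only as an optional ``refinement'': the heavy obstruction demands are \emph{essential}, not cosmetic. The paper takes the pairwise-crossing placement and adds two short ``pole'' demands of weight $M=2B$ sitting on $e_1$ and $e_2$. Routing a pole the long way makes it universal in $H_R$ (forcing a clique of weight $\ge 2M$), so both poles must take their short routes; then the maximal cliques are exactly (pole at $e_1$) $\cup$ ($-$-routed demands) and (pole at $e_2$) $\cup$ ($+$-routed demands), of weights $M+\w(\bar S)$ and $M+\w(S)$, together with the ever-present clique of all crossing demands of weight $M$. The correct target is therefore $\tfrac{3}{2}M=3B$, not $B$. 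Once you commit to the poles and the shifted target, the argument becomes clean and both directions go through; without them, the reduction does not work.
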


\begin{proof}
The decision problem associated with the \textsc{MCRPC} is in NP since, given a routing, one can check in polynomial time whether its maximum (weighted) clique 
is upper bounded by $k$ \cite{Hsu85}.

We prove the completeness by a polynomial reduction from the \textsc{Partition} Problem \cite{GJ}.
Let $S = \{x_2,x_{2+1}\ldots, x_{2+(r-1)}\}$ be a multiset of $r$ positive integers, an instance of the \textsc{Partition} Problem (in the reduction we are going to see the convenience of starting the indices from 2). In the \textsc{Partition} Problem we have to decide if there is $T \subseteq S$ such that
$$
\sum_{x \in T} x = \sum_{x \in S \setminus T} x.
$$

We may assume, w.l.o.g., that $M=\sum_{i=2}^{r+1} x_i$ is an even number.
In fact, the \textsc{Partition} Problem remains NP-complete with $M$ even because one could multiply the numbers of the multiset by two, and then this new, modified problem is equivalent to the general one. 

The reduction is as follows. First, the set of nodes in the cycle is $V = \{1,\ldots,2r+4\}$. The set of demands  $\D$ is composed by $r+2$ demands divided into two groups (see Figure \ref{ejemplo}). 

The demands of the first group are the following two: $p_{1}=(1,2)$ and $p_{r+2}=(1+(r+2),2+(r+2))$. We call these two demands {\it{poles}}.
The weights of the poles are given by
$\w(p_{1})=\w(p_{r+2})=M$. 

In the second group of $r$ demands we code the multiset $S$
given by the \textsc{Partition} Problem. More precisely,
the demands are (for $0<i<r-1$):

\begin{align*} 
p_2 &= (3+0,(3+0)+(r+2)) \\ 
\vdots & \\
p_{2+i} &= (3+i,(3+i)+(r+2)) \\ 
\vdots & \\
p_{2+(r-1)} &=(3+(r-1),3+(r-1)+(r+2)) 
\end{align*}

The total weight assigned to these $r$ demands is $M$, which is divided according to the values given by the instance
of the \textsc{Partition} Problem.  More precisely, $\w(p_{i})=x_i$ for all $2\leq i\leq 2+(r -1)$.

 For any routing $R$, recalling that
$R_p$ is the route assigned to $p$, let
 \[\D_+(R) = \{ p \in \D : R_p=p^+ \}\]
and \[\D_-(R) = \{ p \in \D : R_p=p^- \}.\]

Note that the demands in $\D_+(R)$ are those that are routed in such a way that they contain the pole $p_{r+2}$ while avoiding the pole $p_{1}$. The routing $R$ does the opposite on the demands in 
$\D_-(R)$. 

Also, in an optimum routing $R^*$, the two maximum cliques are ``reached" in the poles of the associated circular-arc graph $H$.
In fact, in the optimal routing, both poles $p_{1}$ and $p_{r+2}$ are in  $\D_+(R^*)$ and the intersection of the arcs associated with the two maximum cliques are the unitary 
length arcs $[1,2]$ and $[r+3,r+4]$. Finally, the weights $W$ and $W'$ of the two cliques
are such that $M < W,W' < 2M$. Also: $W+W'=3M$

Let $k=\frac{3}{2}M$.
Now we are going to show that there exists a multiset 
$T \subseteq S$ such that $\sum_{x \in T} x = \sum_{x \in S \setminus T} x = M/2$ if and only if there exists a routing $R$ in $H$ such that $\omega(H_R,\w) = k$.
Suppose that there is some multiset $T$ satisfying $\sum_{x \in T} x = \sum_{x \in S \setminus T} x = M/2$.
Then, the routing
$R$ given by  $R_{p_i} = p_i^+$, for all $i$ such that $x_i \in T$, and $R_{p_i} = p_i^-$ otherwise, together with $R_{p_{1}} = p_{1}^+$ and $R_{p_{r+2}} = p_{r+2}^+$
gives $\omega(H_R,\w)= k$.

Conversely, let $R$ be a routing in $H$ such that $\omega(H_R,\w) = k$. It follows that $\w(\D_+(R)) = \w(\D_-(R)) = k$. This comes from the fact that the sum of the two polar cliques is $W+W'=3M$.
It finally follows that the multiset 
$T = \{x_i : p_i \in \D_-(R) \setminus \{p_{1}\}\}$ must satisfy that
$\sum_{x \in T} x = \sum_{x \in S \setminus T} x = M/2$.

\end{proof}

\begin{remark}
It is worth noting that the previous  polynomial reduction 
can also be constructed from the \textsc{Partition} Problem to
the general \textsc{MCRPC} but where the cycle, instead of being of arbitrary size, has length 6 
(i.e, we change the subproblem of disjoint demands to the subproblem
of fixed-length cycle). The idea is very simple: instead of using $r$ crossing demands (as the $r=4$ crossing demands of Figure \ref{ejemplo}), we use $r$ multiple demands (i.e, all of them sharing the same origin and the same destination). We therefore conclude that  the subproblem
of fixed-length cycle is also NP-complete.
\end{remark}

\begin{figure}[h] 
\begin{center}
\begin{tikzpicture}[>=stealth,scale=0.7]
\pgfmathsetmacro{\mmt}{0.22}
\pgfmathsetmacro{\mmr}{2.5}

\path[draw,red,thick,
%postaction={decorate,
         %decoration={markings,
         %mark=between positions 0.01 and 1.01 step 1/8 with {\arrow[blue,scale=1.2]{<}; }}}
         ]
         (0:\mmr) arc (0:360:\mmr) -- cycle;
     \foreach \i in {12,...,1} {
    \coordinate (N\i) at (-\i*360/12-45:\mmr cm);
    \fill[black] (N\i) circle (0.05 cm);
    \draw (-\i*360/12-45:\mmr+\mmt) node{{\scriptsize $\i$}};
  }
\draw[blue] (N3) -- (N9); 
\draw[blue] (N4) -- (N10); 
\draw[blue] (N5) -- (N11); 
\draw[blue] (N6) -- (N12); 
\draw (N1) -- (N2); 
\draw (N7) -- (N8); 
\node at (0,2) {{\scriptsize $10$}};
\node at (0,-2) {{\scriptsize $10$}};
\node at (0.8,1.3) {{\scriptsize $1$}};
\node at (1.4,0.7) {{\scriptsize $2$}};
\node at (1.8,-0.2) {{\scriptsize $3$}};
\node at (1.4,-1.0) {{\scriptsize $4$}};

\end{tikzpicture}
\caption{A reduction from the \textsc{Partition} Problem to the \textsc{MCRPC} for $r=|S|=4$, where $S=\{x_2,x_3,x_4,x_5\}=\{1,2,3,4\}$. In the reduction we have $M=x_2+x_3+x_4+x_5=10$. Also, $p_1=(1,2)$, $p_2=(3,9)$, $p_3=(4,10)$, $p_4=(5,11)$, $p_5=(6,12)$ and $p_6=(7,8)$. Finally, $\w(p_1)=10$, $\w(p_2)=1$, $\w(p_3)=2$, $\w(p_4)=3$, $\w(p_5)=4$
and $\w(p_6)=10$.}
\label{ejemplo}
\end{center}
\end{figure}
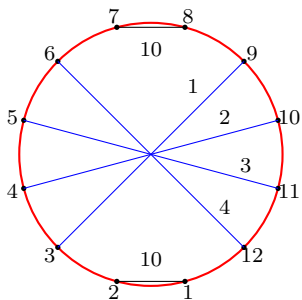

\section{Approximation algorithms}\label{sec:approx}

We first present a simple polynomial time combinatorial algorithm for the \textsc{MCRPC}. Given an instance $(\D,\w)$, it finds a routing $S$ {of $\D$} such that $$\omega(H_S,\w)\leq 2\omega(H_R,\w),$$
for every routing $R$ of $\D$.
Thus, this algorithm is a 2-approximation algorithm for the  \textsc{MCRPC}.

\begin{theorem}\label{t:2-approx}
The  \textsc{MCRPC}  has a 2-approximation algorithm that runs in time $O(|\D|^3)$.
\end{theorem}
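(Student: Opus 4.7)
The plan is to produce the routing $S$ defined by $R_p = p^+$ for every $p \in \D$, i.e., to send every demand along the route that avoids the fixed arc $(n,1)$, and then prove that $\omega(H_S,\w)\leq 2\,c\ell(\D,\w)$.

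First I would observe that no route in $S$ contains the arc $(n,1)$, so every route is contained in the linear path $(1,2,\ldots,n)$ obtained by cutting the cycle at $(n,1)$. As a set of arcs, each route is then a discrete interval on the arc-sequence $(1,2),(2,3),\ldots,(n-1,n)$, and therefore $H_S$ is an \emph{interval} graph. By the Helly property for intervals, every clique in $H_S$ corresponds to a family of routes that share a common arc, so
\[\omega(H_S,\w)\;=\;\max_{e}\;\w\bigl(\{p\in\D : R_p\ni e\}\bigr),\]
the maximum weighted arc-load attained by the routing $S$.

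Next, I would fix an optimal routing $R^*$, so that $\omega(H_{R^*},\w)=c\ell(\D,\w)$. For every arc $e$, the demands whose route in $R^*$ traverses $e$ form a clique in $H_{R^*}$, hence have total weight at most $c\ell(\D,\w)$. Now fix $e\neq(n,1)$ and a demand $p$ with $p^+\ni e$. Since $p^+$ and $p^-$ share only the two ends of $p$, we have $e\notin p^-$; on the other hand $p^-$ always contains $(n,1)$. Thus if $R^*_p = p^+$ then $R^*_p \ni e$, and otherwise $R^*_p \ni (n,1)$, giving
\[\{p\in\D : p^+\ni e\}\;\subseteq\;\{p : R^*_p\ni e\}\;\cup\;\{p : R^*_p\ni (n,1)\}.\]
Summing weights, the load of $S$ on $e$ is bounded by the sum of two arc-loads in $R^*$, hence by $2\,c\ell(\D,\w)$. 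For $e=(n,1)$ the load under $S$ is zero. Taking the maximum over $e$ yields the desired bound $\omega(H_S,\w)\leq 2\,c\ell(\D,\w)$.

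Concerning complexity, constructing $S$ is linear in $|\D|$, while evaluating $\omega(H_S,\w)$ to report the objective value is carried out in $O(|\D|^3)$ via the algorithm of Hsu cited in Section~\ref{sec:preliminaries}, matching the stated running time. The step that needs the most care is the clique/load equivalence in the first paragraph: it uses crucially that $H_S$ is an interval graph, and would fail for a general circular-arc graph, so cutting the cycle at a single prescribed arc (rather than letting each demand pick its own direction) is exactly the design decision that makes the telescoping bound work.
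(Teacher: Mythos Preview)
Your proof is correct, and it follows a genuinely different route from the paper's.

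The paper's algorithm routes each demand along its \emph{shorter} side, so $H_S$ is still only a circular-arc graph; the analysis then locates an arc $a$ contained in the most routes of the maximum clique $Q$, picks an ``antipodal'' arc $b$, and argues that every demand in $\D^Q$ must have one end on each side of $\{a,b\}$, forcing any routing to send it through $a$ or through $b$. Your algorithm instead routes every demand through $p^+$, cutting the cycle at the fixed arc $(n,1)$. This makes $H_S$ an interval graph, so Helly gives $\omega(H_S,\w)$ as a maximum arc-load for free, and the two ``witness'' arcs are simply the loaded arc $e$ and the cut arc $(n,1)$---no search for a most-used arc is needed. Your argument is therefore more elementary and, incidentally, would allow the objective value to be computed faster than $O(|\D|^3)$ by a sweep over the path, though you correctly match the stated bound via Hsu's algorithm. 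What the paper's choice buys is a routing (shortest side) that is a more natural heuristic and may perform better on typical instances, even though the worst-case guarantee is the same factor~$2$.
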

\begin{proof}
The algorithm 
builds 
a routing $S$ for the set of 
demands 
$\D$ as follows: the demand $p$ in $\D$ is routed through $p^+$ if $|V(p^+)|\leq |V(p^-)|$ and through $p^-$, otherwise. Then, by using Hsu's algorithm \cite{Hsu85} it solves the weighted clique problem in the circular-arc graph $H_S$ and output a clique $Q$ achieving the maximum. This can be done in time $O(|\D|^3)$.

Let $\D^Q$ be the set of demands associated to the routes in $Q$. 
In the example of Figure 3 the set of demands $\D$ is given by 
$$\D=\{(1,2), (3,9), (4,10), (5,11), (6,12), (7,8)\}.$$
Then, $S_{(i,j)}=(i,j)^+=[i,j]$, for each $(i,j)\in \D$, since 
$$|V([i,j])|=j-i+1\leq |V([j,i])|=12-(j-i)+1,$$ as
$j-i\leq 6$.
Hence, $H_S$ is the disjoint union of a clique and an isolated node. In this case, the maximal clique $Q$ has 5 routes and $\D^Q$ is the set of demands  $\D^Q=\{(3,9), (4,10), (5,11), (6,12), (7,8)\}$.

Thus, 
$\omega(H_S,\w)=\w(\D^Q)$ 
In order to prove that we have a 2-approximation we show that 
$\w(\D^Q)\leq 2\omega(H_R,\w)$, for each routing $R$ of $\D$.

Let $a$ be an arc that appears the most in the routes of $Q$. 
Let $b=(k,l)$ be an arc such that the lengths of the routes $P_a=[j,k]$ and $P_b=[l,i]$ differ by at most one.
(it is unique when the cycle has an even number of nodes).

We prove that each demand in $\D^Q$ has one end in $P_a$ and one end in $P_b$. This is immediate for a such demand $p$ for which either $a$ or $b$ belongs to $A(S_p)$. So let us assume that neither $a$ nor $b$ belong to $A(S_p)$. Then, by the definition of $P_a$ and $P_b$, $S_p$ is contained either in $P_a$ or in $P_b$. Let $c$ be the arc of $S_p$ closest to $a$. Then, as $Q$ is a clique of $H_S$, each route in $Q$ that contains $a$ must also contain $c$. Then, we get the contradiction that $c$ appears in more routes of $Q$ than $a$.

Let $R$ be any routing of $\D$ and let $p\in \D^Q$. 
Since  each demand in $\D^Q$ has one end in $P_a$ and one end in $P_b$, 
the route $R_p$ contains either $a$ or $b$. Let $R^a$ and $R^b$ be the set of routes defined by $R$ in $\D^Q$ containing $a$ and $b$, respectively. Then, $\w(\D^Q)=\w(R^a)+\w(R^b).$

Both sets $R^a$ and $R^b$ form a clique in $H_R$ and then 
the weight of a maximum clique of $H_R$ is at least $\max\{\w(R^a),\w(R^b)\}$. We conclude that for each routing $R$, $\w(\D^Q)=\w(R^a)+\w(R^b)\leq 2\omega(H_R,\w)$.
\end{proof}

Following ideas from  \cite{stefanakos2004routing}, we now present an LP-based polynomial time $3/2$-approximation algorithm for the \textsc{MCRPC}. Although we get a better approximation ratio than in Theorem \ref{t:2-approx}, the algorithm can be slower than the combinatorial one presented therein.

The algorithm proposed in \cite{stefanakos2004routing} consists in rounding a solution of a relaxation of an integer linear programming formulation for  \textsc{MCRPC} with uniform weights. Since the formulation has an exponential number of inequalities, the polynomiality of the algorithm depends on the ability to efficiently solve the corresponding separation problem. This latter problem consists in solving a clique problem in a circular-arc graph which is known to be polynomially solvable \cite{Hsu85}.

The corresponding integer linear programming formulation for  \textsc{MCRPC} with arbitrary weights also  has an exponential number of inequalities. Moreover, 
the separation problem consists in solving a weighted clique problem in a circular-arc graph. Luckily, this problem is also solvable in polynomial time \cite{Hsu85}.
As a result, we can extend the algorithm for uniform weights to the general situation, as it is proved in the next theorem.

\begin{theorem}\label{th:lp}
The \textsc{MCRPC} has a polynomial $\frac{3}{2}$-approximation algorithm.
\end{theorem}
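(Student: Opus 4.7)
The plan is to mimic the linear-programming approach of \cite{stefanakos2004routing}, which was originally developed for the uniform-weight case, and to verify that its two essential components, separation and rounding, extend to arbitrary weights. For each demand $p \in \D$ I introduce a binary variable $x_p \in \{0,1\}$, where $x_p = 1$ encodes the choice $R_p = p^+$ and $x_p = 0$ encodes $R_p = p^-$, together with a scalar variable $k \geq 0$. For every clique $\Gamma$ of $H$, add the constraint
\[
\sum_{p : \hat{p}^+ \in \Gamma} \w_p\, x_p \;+\; \sum_{p : \hat{p}^- \in \Gamma} \w_p\,(1-x_p) \;\leq\; k,
\]
and minimize $k$. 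Because every clique of $H_R$ is also a clique of $H$, the integer optimum of this program coincides with $c\ell(\D,\w)$.

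Next, I solve the LP relaxation $0 \leq x_p \leq 1$ by the ellipsoid method, so polynomiality reduces to the existence of an efficient separation oracle for the (exponentially many) clique constraints. Given a candidate $(x^*, k^*)$, assign vertex weights $\w_p\, x_p^*$ to $\hat{p}^+$ and $\w_p\,(1-x_p^*)$ to $\hat{p}^-$, and ask whether the maximum weighted clique of this vertex-weighted circular-arc graph exceeds $k^*$. By Hsu's algorithm \cite{Hsu85} this takes $O(|\D|^3)$ time, so the LP can be solved in polynomial time. This is the only place where the arbitrary-weight case deviates from \cite{stefanakos2004routing}: the unweighted maximum clique subroutine is replaced by its weighted counterpart, and both are polynomial on circular-arc graphs.

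Once an optimum $(x^*, k^*)$ has been obtained, I apply the rounding procedure of \cite{stefanakos2004routing} to produce an integer routing $R$ from $x^*$. The argument that $\omega(H_R, \w) \leq \tfrac{3}{2} k^*$ given in that paper proceeds clique-by-clique, using only the linearity of the LP inequalities: for each clique of $H_R$, which is also a clique of $H$, the corresponding LP constraint combined with the rounding rule yields the $3/2$ blow-up. Replacing the uniform cardinality terms of the original analysis by the weighted analogues $\sum_p \w_p\, x_p^*$ preserves every step, since no combinatorial fact beyond the circular-arc structure of $H$ is invoked, and that structure is independent of $\w$.

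The main obstacle I anticipate is bookkeeping rather than a new combinatorial idea: I must check carefully that every quantity appearing in the uniform-weight analysis of \cite{stefanakos2004routing} that was previously a count of demands admits a direct reinterpretation as a $\w$-weighted sum, without destroying the $\tfrac{3}{2}$ bound. Assuming this verification goes through, as the linearity of both the LP and the analysis suggest it should, combining the three steps above produces an algorithm that outputs a routing $R$ with $\omega(H_R,\w) \leq \tfrac{3}{2}\, c\ell(\D,\w)$ in polynomial time, which is the claim of the theorem.
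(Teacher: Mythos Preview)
Your LP formulation and the separation argument via Hsu's weighted-clique algorithm are correct and match the paper. The gap is in the rounding step, where your description of the analysis is inaccurate in two respects.

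First, the bound you state, $\omega(H_R,\w)\le \tfrac32 k^*$, is false in general: take two crossing unit-weight demands on a $4$-cycle. Every integral routing produces a clique of weight $2$, while the LP optimum is $k^*=1$ (split each demand $50$--$50$), so no routing can achieve $\tfrac32 k^*=1.5$. The correct comparison is to the \emph{integer} optimum $\text{OPT}$, not to $k^*$.

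Second, the analysis is not ``clique-by-clique linearity''; that reasoning alone yields only a factor~$2$, since rounding $x_p^*\ge\tfrac12$ to $1$ can double the term $\w_p x_p^*$. The actual argument, which the paper carries out explicitly for arbitrary weights, rests on two combinatorial ingredients you do not mention. One first shows that whenever two \emph{parallel} demands are both fractional, a flow-shifting step makes one of them integral without increasing any clique weight (this exploits that, for parallel $p,q$, one route of $p$ is contained in a route of $q$). After iterating, the still-fractional demands form a pairwise \emph{crossing} set $Q$. Since routes of pairwise crossing demands always intersect, every integral routing contains a clique of weight $\w(Q)$, whence $\w(Q)\le \text{OPT}$. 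Threshold rounding then adds at most $\tfrac12 \w(Q)$, giving $k^*+\tfrac12 \w(Q)\le \text{OPT}+\tfrac12\,\text{OPT}=\tfrac32\,\text{OPT}$. Both the parallel-to-crossing reduction and the lower bound $\w(Q)\le \text{OPT}$ are essential, and neither appears in your plan.
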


\begin{proof}

Following the notation of \cite{stefanakos2004routing}, we introduce for each demand $p \in \D$ two binary variables $x_p$ and $y_p$, with $x_p + y_p = 1$, to indicate whether $p$ is routed over $p^+$ ($x_p = 1$) or over $p^-$ ($y_p = 1$).

Remember that $H$ was defined as
the circular-arc graph obtained as the intersection graph of \emph{all} possible demand routes, i.e., the intersection graph of the set of paths $\{ p^+, p^- \, : \, p \in \D \}$, and let $\CH$ be the set of all cliques of $H$. Then the problem of finding a minimum-weight clique routing 
for a set of demands $\D$
can be formulated as:

\[
\text{(MCR)}\left\{
\begin{aligned}
    &\min K \\
    &\text{s.t.}\\
    & \sum_{p^+ \in C} w_p x_p + \sum_{p^- \in C} w_p y_p \leq K,\quad \forall C \in \CH, \\
    & x_p + y_p = 1, \quad \forall p \in \D,\\
    & x_p, y_p \in \{0,1\}, \quad \forall p \in \D.
\end{aligned}
\right.
\]

Due to the equivalence between optimization and separation, the linear relaxation of this program can be solved in polynomial time. Indeed, given a fractional solution $(x^*, y^*)$ the separation problem can be reduced to finding a maximum-weight clique on the circular-arc graph 
$H$ with respect to the weights $w_p x^*_p$ for $p^+$ and $w_p y^*_p$ for $p^-$, for all $p \in \D$.

Moreover,  
we show that a rounding scheme proposed in \cite{stefanakos2004routing} can be generalized to this case with arbitrary demands, as follows.

Assume in a fractional solution $(x^*, y^*)$ two parallel demands $p$ and $q$ have been routed as indicated in Figure~\ref{fig_parallel_rounding}(a), where the amount of each demand that has been routed along the two possible routes is indicated by the value shown on the corresponding side of the circle chord. For instance, for demand $p$, $x^*_p w_p$ units have been routed along $p^+$, whereas the remaining $y^*_p w_p$ units have been routed along $p^-$. Furthermore, assume that $y^*_p w_p \leq x_q^*w_q$, otherwise the construction described below may be modified accordingly.

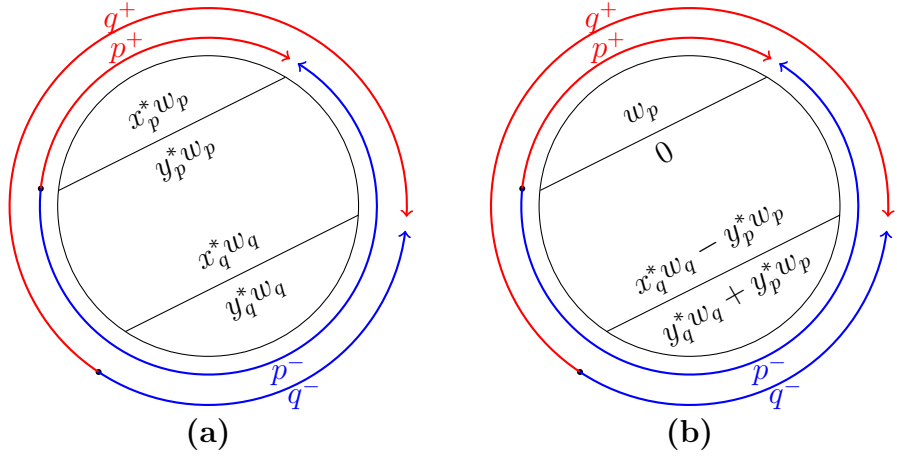
\begin{figure}
    \centering
    \begin{tikzpicture}[scale=0.8]
        \draw (3,4) circle (2.5cm);
        \draw (11,4) circle (2.5cm);
        \draw (0.52,4.26) -- (4.28,6.14);
        \fill (0.22,4.29) circle (0.5mm);
        \draw[red,->,thick] (0.22,4.29) arc (174: 61 : 2.8cm);
        \draw[blue,->,thick] (0.22,4.29) arc (174: 418 : 2.8cm);
        \draw[red] (1.66,6.63) node {$p^+$};
        \draw[blue] (4.34,1.27) node {$p^-$};
        \draw (2.4,5.2) node[rotate=30, anchor= south] {$x^*_p w_p$};
        \draw (2.4,5.2) node[rotate=30, anchor= north] {$y^*_p w_p$};
        \draw (1.62,1.91) -- (5.5,3.85);
        \fill (1.18,1.24) circle (0.5mm);
        \draw[red,->,thick] (1.18,1.24) arc (236.5: -3 : 3.3cm);
        \draw[blue,->,thick] (1.18,1.24) arc (236.5: 353 : 3.3cm);
        \draw[red] (1.54,7.13) node {$q^+$};
        \draw[blue] (4.6,0.87) node {$q^-$};
        \draw (3.56,2.88) node[rotate=30, anchor= south] {$x^*_q w_q$};
        \draw (3.56,2.88) node[rotate=30, anchor= north] {$y^*_q w_q$};
        \draw (8.52,4.26) -- (12.28,6.14);
        \fill (8.22,4.29) circle (0.5mm);
        \draw[red,->,thick] (8.22,4.29) arc (174: 61 : 2.8cm);
        \draw[blue,->,thick] (8.22,4.29) arc (174: 418 : 2.8cm);
        \draw[red] (9.66,6.63) node {$p^+$};
        \draw[blue] (12.34,1.27) node {$p^-$};
        \draw (10.4,5.2) node[rotate=30, anchor= south] {$w_p$};
        \draw (10.4,5.2) node[rotate=30, anchor= north] {$0$};
        \draw (9.62,1.91) -- (13.5,3.85);
        \fill (9.18,1.24) circle (0.5mm);
        \draw[red,->,thick] (9.18,1.24) arc (236.5: -3 : 3.3cm);
        \draw[blue,->,thick] (9.18,1.24) arc (236.5: 353 : 3.3cm);
        \draw[red] (9.54,7.13) node {$q^+$};
        \draw[blue] (12.6,0.87) node {$q^-$};
        \draw (11.56,2.88) node[rotate=30, anchor= south] {$x^*_q w_q - y^*_p w_p$};
        \draw (11.56,2.88) node[rotate=30, anchor= north] {$y^*_q w_q + y^*_p w_p$};
        \draw (3,0.2) node {\textbf{(a)}};
        \draw (11,0.2) node {\textbf{(b)}};
    \end{tikzpicture}
    \caption{Re-routing scheme for parallel fractional demands: (a) Two parallel demands with fractional routings; (b) After re-routing, $x_p=1$ and $y_p = 0$.}
    \label{fig_parallel_rounding}
    
\end{figure}

In this case, we may re-route $y_p^* w_p$ units of demand $p$ from $p^-$ to $p^+$, and the same amount of units of demand $q$ from $q^+$ to $q^-$, obtaining the new routing depicted in Figure~\ref{fig_parallel_rounding}(b). In this solution, the routing for demand $p$ is no longer fractional, as $x_p=1$ and $y_p=0$. Moreover, the load on the routes $p^+$ and $q^-$ has increased by $y_p^* w_p$ units, while the load on the routes $p^-$ and $q^+$ has decreased by the same amount. Since the route $p^+$ (resp. $q^-$) is contained in the route $q^+$ (resp. $p^-$), it follows that 
any clique of the circular-arc graph 
$H$ containing $p^+$ (resp. $q^-$) must also contain  $q^+$ (resp. $p^-$). Hence, the weight of the cliques in $H$ does not increase after this re-routing. 

As a consequence from the last observation, we may assume that all demands corresponding to fractional variables in the optimal solution of the linear relaxation of (MCR) are pairwise crossing. 

Let $I$ be an instance of (MCR) and $Q \subset \D$ be the set of pairwise crossing demands corresponding to the variables with fractional values in the optimal solution $(x^*,y^*)$ of the linear relaxation of (MCR). Observe that $H$ has a clique of weight $w(Q)$ for any (integral) routing. Thus, if  $\text{OPT}(I)$ is the optimal value of (MCR) for the instance $I$, $w(Q) \leq \text{OPT}(I)$. 

Now consider the following rounding scheme: for any $p \in Q$,  all demand weight $w_p$ is routed along $p^+$ if $y^*_p \leq x^*_p$, or along $p^-$, otherwise. 
Hence, the weight of any clique containing either $p^+$ or $p^-$ increases at most by $\frac{1}{2}w_p$. In total, this
rounding does not increase the value $\text{OPT}_f(I)$ of the fractional solution by more than $\frac{1}{2}w(Q)$. Since $\text{OPT}_f(I)$ is a lower bound on $\text{OPT}(I)$ and $w(Q) \leq \text{OPT}(I)$, a feasible solution for (MCR) is obtained with value less than or equal to $\text{OPT}_f(I) + \frac{1}{2}w(Q) \leq 
\text{OPT}(I) + \frac{1}{2}\text{OPT}(I) =
\frac{3}{2}\text{OPT}(I)$.
\end{proof}

\section{A  Fixed Parameter Tractable  algorithm for uniform weights}\label{sec:fpt}

In this section we assume that all weights are uniform, and w.l.o.g that all are equal to 1.
In this case the complexity of the \textsc{MCRPC}  is not known. In fact, the subproblem we used to show that \textsc{MCRPC} is NP-hard for arbitrary weights, can be solved in polynomial time when the weights are all the same. We now present a structural property that we use later to show that \textsc{MCRPC} admits a  Fixed Parameter Tractable (FPT) algorithm.

Given a routing $R$, two parallel demands $p$ and $q$  \emph{collide} in $R$ 
if $A(R_p)\cup A(R_q)$ is the set of arcs of the cycle and $A(R_p)\cap A(R_q)\neq \emptyset$.  In this case, we call the pair $(p,q)$ a \emph{collision} in $R$.
Notice that two demands sharing the same endpoints (multiple demands) do not collide. 

The existence of our  Fixed Parameter Tractable algorithm for the \textsc{MCRPC} is based on the following structural property of its optimal solutions.

\begin{theorem}\label{t:structoptAB}
The \textsc{MCRPC} for uniform weights has an optimal solution without collisions. 
\end{theorem}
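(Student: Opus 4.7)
The plan is to start with an optimal routing $R$ and iteratively ``repair'' every collision while maintaining optimality. Given any collision $(p,q)$ in $R$, since $p$ and $q$ are parallel and non-multiple, after relabeling we may assume the nesting $A(p^+) \subseteq A(q^+)$ (equivalently, $A(q^-) \subseteq A(p^-)$). A quick check of the four possible pairings $(R_p, R_q)$ shows that the unique one producing a collision is $R_p = p^-,\ R_q = q^+$. I will then apply a ``dual flip'': define $R'$ by $R'_p := p^+$, $R'_q := q^-$, and $R'_s := R_s$ for every $s \neq p,q$.

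The heart of the argument is to prove $\omega(H_{R'},\mathbf{1}) \leq \omega(H_R,\mathbf{1})$. Let $C'$ be a maximum clique of $H_{R'}$ and analyze four cases according to whether $p^+$ and $q^-$ belong to $C'$. First, if neither belongs, then $C' \subseteq V(H_R)$ and is already a clique of $H_R$. Second, if $p^+ \in C'$ and $q^- \notin C'$, then every vertex of $C'$ shares an arc with $p^+$, hence (since $A(p^+) \subseteq A(q^+)$) also with $q^+$; noting that $q^+ \notin V(H_{R'})$ and thus $q^+ \notin C'$, the set $(C' \setminus \{p^+\}) \cup \{q^+\}$ is a clique of $H_R$ of the same cardinality as $C'$. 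Third, the symmetric case $q^- \in C'$ and $p^+ \notin C'$ is handled analogously, using $A(q^-) \subseteq A(p^-)$ to swap $q^-$ for $p^-$. Finally, both $p^+, q^- \in C'$ is impossible: $A(p^+) \cap A(q^-) \subseteq A(q^+) \cap A(q^-) = \emptyset$, so these two vertices are non-adjacent in $H_{R'}$. Since weights are uniform, clique weight equals cardinality, so $R'$ is still optimal.

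For termination I use a strictly decreasing monovariant: the total route length $\sum_{s\in\D} |A(R_s)|$. The dual flip changes it by exactly $2(|A(p^+)| - |A(q^+)|)$, and this is strictly negative because $p$ and $q$ have different ends (multiples do not collide) so $A(p^+) \subsetneq A(q^+)$. Since the total length is a bounded non-negative integer, only finitely many iterations can take place, and the final routing obtained is collision-free while still attaining the optimum.

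The main obstacle is the case analysis in the second paragraph: one must notice that the two containments $A(p^+) \subseteq A(q^+)$ and $A(q^-) \subseteq A(p^-)$ are precisely what make the vertex swaps preserve pairwise adjacency in the intersection graph, and that the disjointness $A(p^+) \cap A(q^-) = \emptyset$ is exactly what rules out the remaining case. A secondary point to be careful about is that a dual flip may introduce fresh collisions with other demands, but this is harmless because the length monovariant forces termination regardless.
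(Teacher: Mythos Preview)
Your dual-flip and clique case-analysis are exactly what the paper does: replace $R_p,R_q$ by $\overline{R_p},\overline{R_q}$ and use the containments $\overline{R_p}\subseteq R_q$, $\overline{R_q}\subseteq R_p$ to bound the new clique number. Where you differ is in termination. The paper selects, among all collisions in $R$, one with $|A(R_p)|+|A(R_q)|$ maximal, and then uses this maximality to argue that after the flip neither $p$ nor $q$ participates in any collision, so the \emph{number of collisions} strictly drops. Your total-route-length monovariant is cleaner: any dual flip decreases $\sum_s |A(R_s)|$ by $2(|A(R_p)|+|A(R_q)|-n)>0$ regardless of which collision is chosen, so you need not track how collisions propagate.

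One point to tighten: the claim ``after relabeling we may assume $A(p^+)\subseteq A(q^+)$'' is not in general achievable. Two parallel non-multiple demands may have $A(p^+)\cap A(q^+)=\emptyset$ rather than nested (e.g.\ $p=(1,2)$, $q=(3,4)$ on an $8$-cycle), and then the unique colliding configuration is $R_p=p^-$, $R_q=q^-$, not $R_p=p^-$, $R_q=q^+$; swapping $p$ and $q$ (or rotating the cycle) does not turn this into your nested form. The repair is trivial: phrase everything in terms of $R_p,R_q,\overline{R_p},\overline{R_q}$ instead of the $\pm$ labels. The three facts you actually use---$\overline{R_p}\subseteq R_q$, $\overline{R_q}\subseteq R_p$, and $A(\overline{R_p})\cap A(\overline{R_q})=\emptyset$---follow directly from the definition of a collision, and your four-case analysis and your monovariant computation (now reading $2n-2(|A(R_p)|+|A(R_q)|)<0$) go through verbatim.
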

\begin{proof}
By contradiction, let us assume that every optimal solution has collisions and take an optimal solution $R$ with the minimum number of them. Let $(p,q)$ be a collision in $R$ such that $|A(R_p)|+|A(R_q)|$ is as large as possible. 

Consider the new solution $R'$ obtained from $R$ by replacing $R_p$ and $R_q$ by $\overline{R_p}$ and $\overline{R_q}$, respectively.
Then, in  $H_{R'}$ the vertices $\hat{p}$ and $\hat{q}$ associated with the routes $R'_p$ and $R'_q$ are not adjacent. Moreover, any
neighbor of the vertex $\hat{p}$ in  $H_{R'}$ is also a neighbor of the vertex associated to $R_q$ in  $H_{R}$ because $R'_p\subseteq R_q$; similarly, any neighbor of the vertex $\hat{q}$ in $H_{R'}$ is also a neighbor of the vertex associated to $R_p$ in $H_{R}$.

Let $Q$ be a set of pairwise adjacent vertices in  $H_{R'}$ such that $|Q|=\omega(H_{R'},\mathbf{1})$. We prove that $|Q|\leq \omega(H_R,\mathbf{1})$. 

If $Q$ neither contain $\hat{p}$ nor $\hat{q}$, then the vertices in $Q$ are also pairwise adjacent in $H_{R}$. Whence,  $|Q|\leq \omega(H_R,\mathbf{1})$.
Otherwise, since $\hat{p}$ and $\hat{q}$ are not adjacent in $H_{R'}$, we can assume that exactly one of them, let us say $\hat{p}$, belongs to $Q$. Hence, the vertices in $Q\setminus \{\hat{p}\}$ are adjacent to the vertex associated to $R_q$ in $H_{R}$ thus defining a clique in $H_{R}$ of size $|Q|$. This shows that $|Q|\leq \omega(H_R, \mathbf{1})$.
Therefore, $\omega(H_{R'},\mathbf{1})\leq \omega(H_R, \mathbf{1})$.

In order to get a contradiction we observe that if $o \in \D$ with $e(o)\neq e(p)$ and $e(o)\subseteq V(\overline{R_p})$ then, $R_o \subseteq \overline{R_p}$. Otherwise, $R_p \subseteq R_o$ and $(o,q)$ would be a collision, because $e(q)\subseteq R_p$. Since $|A(R_p)| < |A(R_o)|$, we would get a contradiction. Similarly, if $e(o)\neq e(q)$ and $e(o)\subseteq V(\overline{R_q})$, then $R_o \subseteq \overline{R_q}$. This shows that neither $p$ nor $q$ are involved in any collision in $R'$.
As any collision in $R'$ not involving neither $p$ nor $q$ is a collision in $R$ we conclude that the number of collisions in $R'$ is smaller than the ones in $R$. 
\end{proof}

Even though we do not known how to find an optimal routing for \emph{all} the demands (in the general case), we can find an optimal routing when they are all parallel to some fixed demand. In order to present this result more precisely, we need some definitions. 

For a demand $p\in \D$,  $\D(p)$ denotes the multiset of all demands $q$ parallel to $p$ such that $e(q)\neq e(p)$. Let $S^p$ be the routing of $\D(p)$ given by:
$S^p_q\subseteq p^+$, when $e(q)\subseteq V(p^+)$, and $S^p_q\subseteq p^-$, when $e(q)\subseteq V(p^-)$. 
Also, consider  $c_p^+=|\{q: S^p_q\subseteq p^+,q\neq p\}|$  and  $c_p^-= |\{q: S^p_q\subseteq p^-,q\neq p\}|$.


\begin{prop}\label{p:isolatedAB}
Let $p\in \D$ be such that $\D=\D(p)\cup \{p\}$. If  $c_p^+\geq c_p^-$, there exists an optimal routing $R$ without collisions such that $R_p=p^-$.
\end{prop}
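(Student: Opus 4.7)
By Theorem~\ref{t:structoptAB} there exists an optimal collision-free routing $R$; if $R_p = p^-$ we are done, so assume $R_p = p^+$. The plan is to construct a collision-free routing $R^*$ with $R^*_p = p^-$ satisfying $\omega(H_{R^*},\mathbf{1}) \leq \omega(H_R,\mathbf{1})$, which by the optimality of $R$ yields the desired conclusion.

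The natural candidate is $R^*_p = p^-$ and $R^*_q = S^p_q$ for every $q \in \D(p)$. Writing $D^{\pm} = \{q \in \D(p) : e(q) \subseteq V(p^{\pm})\}$, so that $|D^{\pm}| = c_p^{\pm}$, each $S^p_q$ is entirely contained in either $p^+$ or $p^-$; hence no two routes of $R^*$ can together cover the cycle, so $R^*$ is collision-free. The graph $H_{R^*}$ then decomposes into two interval-graph pieces: $\{S^p_q : q \in D^+\}$ sitting inside $p^+$, with maximum clique $M_+$; and $\{R^*_p\} \cup \{S^p_q : q \in D^-\}$ sitting inside $p^-$, with maximum clique $1 + M_-$. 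Therefore $\omega(H_{R^*},\mathbf{1}) = \max\{M_+,\,1 + M_-\}$.

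On the other hand, collision-freeness of $R$ together with $R_p = p^+$ forces $R_q = S^p_q$ for every $q \in D^+$, for otherwise $(p,q)$ would be a collision in $R$. Consequently, at an arc of $p^+$ realizing the maximum clique $M_+$ of $\{S^p_q : q \in D^+\}$, the $M_+$ routes of $D^+$ through that arc together with $R_p$ form a clique of size $1 + M_+$ in $H_R$. This yields $\omega(H_R,\mathbf{1}) \geq 1 + M_+ \geq M_+$, already handling the first term of $\max\{M_+,\,1+M_-\}$.

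The main obstacle is to show $\omega(H_R,\mathbf{1}) \geq 1 + M_-$, and this is where the hypothesis $c_p^+ \geq c_p^-$ is essential. Each $q \in D^-$ is routed in $R$ either by its ``small'' route $S^p_q$ (an interval inside $p^-$) or by its ``large'' route $\overline{S^p_q}$, which contains all of $p^+$. Let $L$ denote the set of demands of $D^-$ routed large in $R$. At an arc of $p^+$ realizing $M_+$, each route of $L$ also passes through that arc, so the clique in $H_R$ there has size at least $1 + M_+ + |L|$; thus if $|L| \geq M_- - M_+$ we already obtain $\omega(H_R,\mathbf{1}) \geq 1 + M_-$. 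Otherwise $|L| < M_- - M_+$, and I would exploit $c_p^+ \geq c_p^- \geq M_-$ by a pigeonhole/exchange step: swap a suitable small route in $D^+$ with a large route (or vice versa) to enlarge $|L|$ without increasing $\omega(H_R,\mathbf{1})$, iterating until the first case applies. The delicate point, and the hardest part of the proof, is verifying that such an exchange is always available and benign precisely under $c_p^+ \geq c_p^-$, which balances the contributions of the two sides of $p$ to the maximum clique.
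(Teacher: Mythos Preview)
Your last paragraph is where the argument breaks down. You correctly reduce matters to showing $\omega(H_R,\mathbf{1}) \geq 1 + M_-$, but the ``pigeonhole/exchange step'' is only a hope, not a proof: the hypothesis $c_p^+ \geq c_p^-$ controls the \emph{counts} on the two sides of $p$, not the interval-clique numbers $M_{\pm}$. Take $c_p^+ = c_p^-$ with the $D^+$ demands pairwise disjoint (so $M_+ = 1$) and the $D^-$ demands nested (so $M_- = c_p^-$). An optimal collision-free $R$ with $R_p = p^+$ can route some outermost $D^-$ demands along their long route and achieve $\omega(H_R,\mathbf{1})$ well below $1 + M_-$; your own lower bound $1 + M_+ + |L| = 2 + |L|$ is useless here, and there is no exchange to perform on a \emph{fixed} routing $R$. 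So the approach through $M_{\pm}$ does not close.

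The paper's proof is much shorter and avoids $M_{\pm}$ entirely: it asserts $\omega(H_{R'},\mathbf{1}) \geq c_p^+ + 1$ for any collision-free $R'$ with $R'_p = p^+$, and then bounds the canonical routing (your $R^*$) by $\omega(H_{R^*},\mathbf{1}) \leq \max\{c_p^+,\, c_p^- + 1\} \leq c_p^+ + 1$, the last inequality being exactly where $c_p^+ \geq c_p^-$ enters. Be aware, however, that the paper's lower bound $\omega(H_{R'},\mathbf{1}) \geq c_p^+ + 1$ is stated with no justification beyond ``all $D^+$ demands are forced to their short route inside $p^+$'', which in fact only yields $\omega(H_{R'},\mathbf{1}) \geq M_+ + 1$: the $c_p^+$ short routes in $p^+$ need not pairwise intersect, as the disjoint-$D^+$ configuration above already shows. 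So while the paper's route is structurally simpler and makes clear where the hypothesis is meant to be used, it leans on the very inequality you were unable to establish.
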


\begin{proof} From Theorem \ref{t:structoptAB} we know that there is an optimal solution without collisions.
Suppose that $R'$ is an optimal routing without collisions such that $R'_p=p^+$.
{We know that for} each $q$, with $S^p_q\subseteq p^+$, {we have that } $R'_q=q^+$, in order  to avoid collisions with $R'_p$. Then, $\omega(H_{R'},\mathbf{1})\geq c_p^+ +1$. 

Let $R$ be the extension of $S^p$ to a routing of $\D$ given by $R_p=p^-$ and 
$R_q=S^p_q$, for $q\neq p$. Then, 
we get that  $\omega(H_R,\mathbf{1})=\max\{c_p^+,c_p^- +1\}\leq c_p^+ +1$, since $c_p^+\geq c_p^-$, by hypothesis. This shows that $\omega(H_R,\mathbf{1})\leq \omega(H_{R'},\mathbf{1})$. Therefore,  since $S^p$ has no collisions, 
$R$ is an optimal routing  without collisions.
\end{proof}

A demand $p$ is \emph{critical} for a routing $R$ if  for all $q\in \D(p)$, $R_q=S^p_q$.

It is easy to see that if $p$ is such that $R_p$ has the largest number of edges in  a routing $R$ without collisions, 
then $p$ is critical for $R$. Notice that the number of routings for which a given demand $p$ is critical is at most $2^{|\D\setminus \D(p)|}$. {Thus, if $k=\max\{|\D\setminus\D(p)|: p\in \D\}$, we can derive an FPT algorithm parameterized by $k$ since this number provides a bound on the possible routings having some critical demand in $\D$.}
Notice also that when $\D$ has no multiple demands, for each demand $p\in \D$, the value $|\D\setminus \D(p)|$ is a lower bound for the degree of $p$ in the graph $H_R$, for any routing $R$. Hence, in this case, $k$ is a lower bound for the maximum degree of $H_R$.

\begin{theorem}\label{thm:fpt_general}
 The \textsc{MCRPC} has an FPT algorithm with time complexity $O(2^k|\D|^4)$.
\end{theorem}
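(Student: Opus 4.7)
The plan is to design a brute-force enumeration algorithm whose search space is bounded via the structural results already established. Theorem \ref{t:structoptAB} guarantees the existence of an optimal collision-free routing $R^\star$, and the observation made immediately after the definition of critical demand states that the demand $p^\star$ whose route $R^\star_{p^\star}$ uses the largest number of arcs must be critical for $R^\star$. Consequently it suffices to search over all routings $R$ that arise, for some $p \in \D$, as an arbitrary choice of $R_q \in \{q^+, q^-\}$ for each $q \in \D \setminus \D(p)$ together with the forced assignment $R_q = S^p_q$ for every $q \in \D(p)$.

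The algorithm I propose iterates over each candidate critical demand $p \in \D$; for each $p$ it enumerates the $2^{|\D \setminus \D(p)|}$ possible routings of the demands in $\D \setminus \D(p)$ (note that $p \in \D \setminus \D(p)$, so $R_p$ is also fixed by this choice), extends each one to a full routing of $\D$ by means of $S^p$, computes $\omega(H_R, \mathbf{1})$ using Hsu's $O(|\D|^3)$ algorithm for maximum cliques on circular-arc graphs \cite{Hsu85}, and records the minimum value. Correctness follows because taking $p := p^\star$ and letting the binary choices on $\D \setminus \D(p^\star)$ agree with $R^\star$ reproduces $R^\star$ exactly, so the optimum $c\ell(\D, \mathbf{1})$ is attained by at least one enumerated routing.

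For the running time, the outer loop performs $|\D|$ iterations, the inner loop performs at most $2^{|\D \setminus \D(p)|} \leq 2^k$ iterations, and each clique evaluation costs $O(|\D|^3)$, yielding the claimed $O(2^k |\D|^4)$ bound. The only subtle point is making sure that the enumeration captures \emph{some} optimal collision-free routing; this rests entirely on the two structural facts already available, namely Theorem \ref{t:structoptAB} for the existence of a collision-free optimum and the longest-route-is-critical observation for reducing the search space. No further case analysis is required beyond assembling these ingredients into the algorithm and writing out the running-time calculation.
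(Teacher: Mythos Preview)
Your proposal is correct and follows essentially the same approach as the paper's proof: enumerate candidate critical demands $p$, for each $p$ try all $2^{|\D\setminus\D(p)|}\le 2^k$ extensions of $S^p$, evaluate each with Hsu's $O(|\D|^3)$ clique algorithm, and appeal to Theorem~\ref{t:structoptAB} together with the longest-route-is-critical observation for correctness. The paper's argument is the same in structure and running-time analysis; your write-up merely spells out the correctness step a bit more explicitly.
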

\begin{proof}
We define an  algorithm that computes, for each demand $p$, a routing called $R^p$ which minimizes $\omega(H_{R},\mathbf{1})$, over all routings $R$ of $\D$ for which $p$ is critical for $R$. 
The output of the algorithm is a routing $R^*$ that minimizes $\omega(H_{R^p},\mathbf{1})$ over all $p\in \D$.

The algorithm is correct since each routing without collisions has a critical demand and then, an optimal solution to our problem without collisions must be $R^p$, for some $p\in\D$. 

For a given routing $R$, computing $\omega(H_{R},\mathbf{1})$ takes time $O(|\D|^3)$. Then, to determine $R^p$ takes time $O(2^k|\D|^3)$, since the set of routings for which $p$ is critical has at most $2^k$ elements. Therefore, the time complexity of this algorithm is $O(2^k|\D|^4)$.
\end{proof}

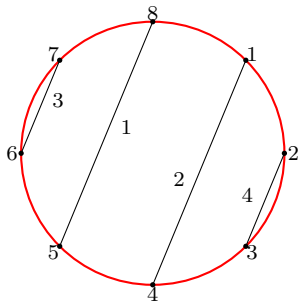
\begin{figure}\label{f:optcoll}
\centering
\begin{tikzpicture}[>=stealth,scale=0.7]
\pgfmathsetmacro{\mmt}{0.17}
\pgfmathsetmacro{\mmr}{2.5}

\path[draw,red,thick]
         
         (0:\mmr) arc (0:360:\mmr) -- cycle;
     \foreach \i in {8,...,1} {
    \coordinate (N\i) at (-\i*360/8+90:\mmr cm);
    \fill[black] (N\i) circle (0.05 cm);
    \draw (-\i*360/8+90:\mmr+\mmt) node{{\scriptsize $\i$}};
  }
\draw (N5) -- (N8);
\draw (N6) -- (N7);
\draw (N3) -- (N2);
\draw (N1) -- (N4);

\draw (-0.5,0.5) node{{\scriptsize $1$}};
\draw (0.5,-0.5) node{{\scriptsize $2$}};
\draw (1.8,-0.8) node{{\scriptsize $4$}};
\draw (-1.8,1) node{{\scriptsize $3$}};

\end{tikzpicture}
\caption{The optimal solution to the instance of the figure has value 5 and any feasible solution without collisions has value at least 6.}
\end{figure}

{It is worth to notice that the} property allowing to show that the FPT algorithm works for uniform weights cannot be extended to the case of arbitrary weights, as the  example of Figure \ref{f:optcoll} shows.

\section{Conclusions}

We have proved that for arbitrary weights \textsc{MCRPC} is NP-hard and it has a 3/2-approximation algorithm. 

We also have proposed an FPT algorithm for the case when all weights are equal to one. 
Unfortunately, despite a great amount of effort, we have been unable to determine the complexity of \textsc{MCRPC} in this case. However, for some special cases (those appearing in the proof of the NP-hardness of \textsc{MCRPC}) we know that the unweighted case has a polynomial time algorithm whereas the weighted case is still NP-hard. 

Surprisingly, even though we do not know how to get an optimal solution when all the weights are the same, we do know how to route each \emph{extreme} demand, i.e., a demand $p$ such that either $p^+$ or $p^-$ does not contain the two ends of another demand.
In fact, one can see that if $p^+$ does not contain the two ends of another demand, then the vertex representing $p^-$ in $H$ is universal. Thus, if $R$ routes $p$ through $p^-$, then the routing $R'$ obtained from $R$ by just changing the route of  $p$ from $p^-$ to $p^+$ defines a graph $H_{R'}$ whose maximum clique has at most the size of a maximum clique of $H_{R}$. 

Notice that this idea allows us to reduce the original problem to a smaller one consisting only of demands which are not extreme. However, this new problem is a weighted problem. Even worse, the remaining demands $q$ might have a different weights for $q^+$ and $q^-$. 
\section*{Acknowledgements}

We thank Annegret Wagler for earlier discussions on the Routing and Spectrum Assignment (RSA) problem on trees and cycles, which motivated our work on this topic.

This work was partially supported by: STIC-AMSUD 22-STIC-08, EPN PIGR-19-11 (L.M. Torres), AMSUD210003 (M. Matamala and I. Rapaport), BASAL ANID-Chile (M. Matamala and I. Rapaport), FONDECYT 1220142 (I.Rapaport), PICT 2020-03032 Serie A ANPCyT and  PID UNR  80 02 02 10 20 00 06 UR (M. Escalante and P.Tolomei).

\bibliographystyle{plain}

\end{document}